\newtheorem{theorem}{Theorem}
\newtheorem{lemma}[theorem]{Lemma}
\def\paren#1{\left( #1 \right)}
\def\acc#1{\left\{ #1 \right\}}
\def\floor#1{\left\lfloor #1 \right\rfloor}
\def\ceil#1{\left\lceil #1 \right\rceil}
\renewcommand{\le}{\leqslant}
\renewcommand{\ge}{\geqslant}
\newcommand{\cc}{\cellcolor{Gray!30}}
\newcommand{\RT}{\hbox{\rm RT}}
\newcommand{\CP}{\hbox{$\mathcal{CP}$}}
\newcommand{\A}{\hbox{$\mathbb{A}$}}
\newcommand{\T}{\hbox{$\mathcal{T}$}}
\newcommand{\set}[1]{\ensuremath{\left\{#1 \right\}}}
\begin{document}

\title{On repetition thresholds of caterpillars and trees of bounded degree}

\author
{
  Borut Lu\v{z}ar\thanks{Faculty of Information Studies, Novo mesto, Slovenia. 
    E-Mail: \texttt{borut.luzar@gmail.com}},\quad
  Pascal Ochem\thanks{LIRMM, Universit\'e de Montpellier, Montpellier, France.
    E-Mail: \texttt{pascal.ochem@lirmm.fr}}, \quad    
  Alexandre Pinlou\thanks{LIRMM, Universit\'e Paul-Valery Montpellier 3, Montpellier, France. \newline
    E-Mail: \texttt{alexandre.pinlou@lirmm.fr}},\quad
}

\maketitle

{
  \begin{abstract}
    \noindent The \emph{repetition threshold} is the smallest real
    number $\alpha$ such that there exists an infinite word over a
    $k$-letter alphabet that avoids repetition of exponent strictly
    greater than $\alpha$.  This notion can be generalized to graph
    classes. In this paper, we completely determine the repetition
    thresholds for caterpillars and caterpillars of maximum degree
    $3$. Additionally, we present bounds for the repetition thresholds
    of trees with bounded maximum degrees.
  \end{abstract}
}

\bigskip
{\noindent\small \textbf{Keywords:} Infinite word; Repetition threshold; Graph coloring}

\section{Introduction}

A word $w$ of \emph{length} $|w|$=$r$ over an alphabet $\mathbb{A}$ is a
sequence $w_1\dots w_r$ of $r$ letters, i.e. $r$ elements of $\mathbb{A}$.
A \textit{prefix} of a word $w = w_1\dots w_r$
is a word $p = w_1\dots w_s$, for some $s \le r$.

A \emph{repetition} in a word $w$ is a pair of words $p$ (called the
\emph{period}) and $e$ (called the \emph{excess}) such that $pe$ is a
factor of $w$, $p$ is non-empty, and $e$ is a prefix of $pe$.
The \emph{exponent} of a repetition $pe$ is $\exp(pe) = \tfrac{|pe|}{|p|}$. A
\emph{$\beta$-repetition} is a repetition of exponent $\beta$. A word is
\emph{$\alpha^+$-free} (resp. \textit{$\alpha$-free}) if it 
contains no $\beta$-repetition such that $\beta>\alpha$ (resp. $\beta\ge\alpha$).

\bigskip

Given $k \ge 2$, Dejean~\cite{Dej72} defined the repetition threshold
$\RT(k)$ for $k$ letters as the smallest $\alpha$ such that there
exists an infinite $\alpha^+$-free word over a $k$-letter alphabet. 

Dejean initiated the study of $\RT(k)$ in 1972 for $k=2$ and $k=3$.
Her work was followed by a series of papers which determine the exact
value of $\RT(k)$ for any $k\ge 2$.

\begin{theorem}[\cite{Car07,CurRam09b,CurRam09,CurRam11,Dej72,MohCur07,Oll92,Pan84,Rao11}]
  \label{thm:path}
  ~
  \begin{enumerate}[$(i)$]
  \item $\RT(2) = 2$~\cite{Dej72}; \label{thm:path_2}
  \item $\RT(3) = \tfrac{7}{4}$~\cite{Dej72};
  \item $\RT(4) = \tfrac{7}{5}$~\cite{Pan84};
  \item $\RT(k) = \tfrac{k}{k-1}$, for $k \ge 5$~\cite{Car07,CurRam09b,CurRam09,CurRam11,MohCur07,Oll92,Pan84,Rao11}.
  \end{enumerate}
\end{theorem}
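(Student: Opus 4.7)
The plan is, for each $k$, to prove matching lower and upper bounds on $\RT(k)$. The lower bound $\RT(k) \ge \alpha_k$, where $\alpha_k$ denotes the claimed value, reduces to showing that for every rational $\beta < \alpha_k$ there is no infinite $\beta^+$-free word over $k$ letters; equivalently, every sufficiently long such word contains a repetition of exponent $> \beta$. This is a finite check: the set of $\beta^+$-free words is prefix-closed and finitely branching, so one can verify by exhaustive backtracking that the corresponding tree has bounded depth. For $k \in \{2,3,4\}$ the relevant depths are small enough to carry out by hand; for larger $k$ the check is routine but mechanical.

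For the upper bound $\RT(k) \le \alpha_k$ one must exhibit an infinite $\alpha_k^+$-free word. For $k=2$, I would take the Thue--Morse word, the fixed point of $0 \mapsto 01,\ 1 \mapsto 10$, and prove it is overlap-free by the classical desubstitution argument: a hypothetical overlap $axaxa$ pulls back through the inverse of the morphism to a shorter overlap, yielding an infinite descent. For $k=3$, I would exhibit a specific morphism whose fixed point is $(7/4)^+$-free. The verification uses synchronization: any long factor of the fixed point admits a unique factorization as an image of a factor of the original word, so a hypothetical $(7/4)^+$-repetition descends to a shorter one, and the base case is resolved by a finite enumeration.

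For $k \ge 4$, the key tool is the Pansiot encoding. Since $\alpha_k < 2$, any $\alpha_k^+$-free word has all consecutive letters distinct, so it can be losslessly recoded as a word over $k-1$ letters by recording at each step the offset of the next letter in some cyclic order. Repetitions in the original word correspond to repetitions in the encoding whose periods satisfy an explicit ``sum'' constraint modulo $k$. Constructing a Dejean word therefore reduces to producing an infinite word over the smaller alphabet that avoids a prescribed family of repetitions, which I would obtain as a fixed point of a well-chosen morphism.

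The main obstacle is the case $k \ge 5$ --- Dejean's conjecture --- where one must, for each $k$, exhibit a morphism whose Pansiot encoding simultaneously avoids every repetition that would witness exponent $> k/(k-1)$. No single uniform construction handles all $k$: a parametrized family of morphisms can be made to work for all sufficiently large $k$, while the remaining small values require separate, ad hoc constructions. In each case, the proof that a candidate morphism achieves $\alpha_k^+$-freeness is a delicate but finite computation verifying that images of $\alpha_k^+$-free words remain $\alpha_k^+$-free.
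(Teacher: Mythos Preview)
The paper does not supply a proof of this theorem: it is quoted as a background result, with each item attributed to the cited references. There is therefore no ``paper's own proof'' to compare against; your proposal is effectively a sketch of the arguments found in those references.

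As such a sketch, your outline is broadly faithful to the literature: lower bounds via finiteness/backtracking (with Dejean's uniform argument for general $k$), Thue--Morse for $k=2$, Dejean's morphic construction for $k=3$, and the Pansiot reduction together with morphic constructions for $k\ge4$, culminating in the patchwork of papers that settled $k\ge5$. Two small corrections are worth noting. First, the Pansiot encoding of a $(k/(k-1))^+$-free $k$-ary word is over a \emph{binary} alphabet, not over $k-1$ letters: since any $k-1$ consecutive letters are distinct, the next letter is either the unique missing letter or repeats the letter $k-1$ positions back, and one records this single bit. Second, the lower bound $\RT(k)\ge k/(k-1)$ for all $k\ge5$ is not handled by a separate finite search for each $k$; Dejean gave a uniform combinatorial argument valid for every $k$, and that is what the cited paper~\cite{Dej72} contains.
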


\bigskip

The notions of $\alpha$-free word and $\alpha^+$-free word have been
generalized to graphs.  A graph $G$ is determined by a set
of vertices $V(G)$ and a set of edges $E(G)$.  A mapping
$c \ : \ V(G) \to \set{1,\dots,k}$ is a \textit{$k$-coloring} of $G$.
A sequence of colors on a non-intersecting path in a $k$-colored graph
$G$ is called a \textit{factor}. A coloring is said to be $\alpha^+$-free
(resp. $\alpha$-free) if
every factor is $\alpha^+$-free (resp. $\alpha$-free).

The notion of repetition threshold 
can be generalized to graphs as follows.
Given a graph $G$ and $k$ colors, 
$$
\RT(k,G)=\inf_{k\textrm{-coloring}\ c}\sup\acc{\exp(w)\,|\, w\ \textrm{is~a~factor~in}\ c}\,.
$$
When considering the repetition threshold over a whole class of graphs $\mathcal{G}$, it is defined as
$$
\RT(k,\mathcal{G})=\sup_{G\in\mathcal{G}}\RT(k,G)\,.
$$
In the remainder of this paper, $\mathcal{P}$, $\mathcal{C}$,
$\mathcal{S}$, $\mathcal{T}$, $\mathcal{T}_k$, $\mathcal{CP}$, and
$\mathcal{CP}_k$ respectively denote the classes of paths, cycles,
subdivisions\footnote{A subdivision of a graph $G$
  is a graph obtained from $G$ by a sequence of edge subdivisions. Note
  that by a graph subdivision, a ``large enough''
  subdivision is always meant.}, trees, trees of maximum degree $k$, caterpillars and
caterpillars of maximum degree $k$.

Since $\alpha^+$-free words are closed under reversal, the repetition thresholds for paths
are clearly defined as $\RT(k,\mathcal{P}) = \RT(k)$, and thus Theorem~\ref{thm:path}
completely determines $\RT(k,\mathcal{P})$.

In 2004, Aberkane and Currie~\cite{AbeCur04} initiated the study of
the repetition threshold of cycles for $2$ letters. Another result of
Currie~\cite{Cur02} on ternary circular square-free word allows to
determine the repetition threshold of cycles for $3$ letters. In 2012,
Gorbunova~\cite{Gor12} determined the repetition threshold of cycles for $k\ge 6$ letters.

\begin{theorem}[\cite{AbeCur04,Cur02, Gor12}]
  \label{thm:cycle}
  ~
  \begin{itemize}
  \item[$(i)$] $\RT(2,\mathcal{C}) = \tfrac{5}{2}$~\cite{AbeCur04};
  \item[$(ii)$] $\RT(3,\mathcal{C}) = 2$~\cite{Cur02};
  \item [$(iii)$] $\RT(k,\mathcal{C}) = 1+\tfrac1{\ceil{\nicefrac{k}{2}}}$, for $k\ge 6$~\cite{Gor12}.
  \end{itemize}
\end{theorem}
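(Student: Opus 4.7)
The plan is to prove each part separately; the techniques diverge significantly between the small-alphabet cases $k=2,3$ and the uniform regime $k\ge 6$.

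For part $(iii)$ with $k\ge 6$, I would establish matching bounds. For the \emph{lower bound}, I would exhibit a cycle $C_n$ (depending on $k$) on which every $k$-coloring must contain a factor of exponent at least $1+\tfrac{1}{\ceil{k/2}}$; the argument combines a pigeonhole on cyclic windows with a structural case analysis of how colors can be distributed around the cycle, forcing a short repetition of the right period-to-excess ratio. For the \emph{upper bound}, I would construct, for each sufficiently large $n$, a $k$-coloring of $C_n$ avoiding exponents strictly greater than $1+\tfrac{1}{\ceil{k/2}}$, built by taking carefully chosen factors of the Dejean-type infinite $k$-ary words guaranteed by Theorem~\ref{thm:path} and identifying their endpoints so that no new large repetition crosses the junction.

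For parts $(i)$ and $(ii)$, the target exponents $5/2$ and $2$ strictly exceed $\RT(k)$, reflecting that the cyclic constraint genuinely strengthens the repetition threshold for small alphabets. The lower bounds would come from a case analysis on short binary (respectively ternary) circular words, in all likelihood completed by a finite computer verification, showing that any $k$-coloring of a long enough cycle must contain a factor of the stated exponent. The upper bounds would build explicit circular words of every sufficiently large length — e.g., binary circular words of maximum exponent $5/2$ in the case of Aberkane--Currie and ternary circular $2^+$-free words in the case of Currie — by concatenating short morphic blocks and directly checking that every factor, including wrap-around factors, meets the bound.

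The main obstacle I anticipate is precisely the wrap-around control in the upper bound: factors in a cycle can cross any fixed ``start'' position, so an infinite word that is $\alpha^+$-free as an ordinary word may cease to be when folded into a cycle. Achieving $\alpha^+$-freeness for cycles of every sufficiently large length $n$ (not merely infinitely many) is the crux, typically requiring one to express $n$ as a non-negative integer combination of a finite set of allowed block lengths and to verify that every concatenation boundary remains $\alpha^+$-free. A secondary obstacle is the $k=2$ case, where the very narrow alphabet forces particularly intricate combinatorics and, historically, relied heavily on computer-assisted search.
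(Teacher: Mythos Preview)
This theorem is not proved in the paper at all: it is stated as a citation of prior results, attributed to Aberkane--Currie~\cite{AbeCur04}, Currie~\cite{Cur02}, and Gorbunova~\cite{Gor12}, and the paper moves on immediately without offering any argument. There is therefore no ``paper's own proof'' against which to compare your proposal.

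Your sketch is a plausible outline of how such results are typically obtained (pigeonhole/case analysis for lower bounds, constructions handling wrap-around for upper bounds), and it may or may not track what the cited papers actually do; but if your goal was to supply a proof where the present paper gives one, that goal is misplaced. The appropriate ``proof'' here is simply to cite the three references, as the paper does.
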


Gorbunova~\cite{Gor12} also conjectured that $\RT(4,\mathcal{C}) = \tfrac32$ and $\RT(4,\mathcal{C}) = \tfrac 43$.

For the classes of graph subdivisions and trees, the bounds are
completely determined~\cite{OchVas12}.

\begin{theorem}[\cite{OchVas12}]
  \label{thm:sub}
  ~
  \begin{itemize}
  \item[$(i)$] $\RT(2,\mathcal{S}) = \tfrac{7}{3}$;
  \item[$(ii)$] $\RT(3,\mathcal{S}) = \tfrac{7}{4}$;
  \item[$(iii)$] $\RT(k,\mathcal{S}) = \tfrac{3}{2}$, for $k \ge 4$.
  \end{itemize}
\end{theorem}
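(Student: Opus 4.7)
The statement splits into three parts, each requiring matching upper and lower bounds.

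\emph{Upper bounds.} For each $k$, the plan is to take an infinite $\alpha_k^+$-free word $w_k$ over $k$ letters, where $\alpha_2 = \tfrac{7}{3}$, $\alpha_3 = \tfrac{7}{4}$, and $\alpha_k = \tfrac{3}{2}$ for $k \ge 4$. Theorem~\ref{thm:path} already provides such words for $k \ge 3$; for $k=2$, any overlap-free binary word is $\tfrac{7}{3}^+$-free and hence suffices. Given an arbitrary graph $G$, I would take a sufficiently large subdivision $G'$, assign arbitrary $k$-colors to its branching vertices (those of degree $\ge 3$), and color each subdivided path by inserting a factor of $w_k$ whose initial and final letters match the colors of the two endpoint branching vertices. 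With enough subdivision, such factors of the required length are available.

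\emph{Lower bounds.} The bound $\RT(3,\mathcal{S}) \ge \tfrac{7}{4}$ is immediate from Theorem~\ref{thm:path} since $\mathcal{P} \subseteq \mathcal{S}$. For $k=2$ and $k \ge 4$ the corresponding path thresholds ($2$ and $\tfrac{k}{k-1}$) are strictly below the claimed $\alpha_k$, so a genuine argument is needed. The plan is to exhibit, for each such $k$, a finite graph $G_k$ such that every sufficiently large subdivision of $G_k$ admits no $\alpha^+$-free $k$-coloring for any $\alpha < \alpha_k$. The combinatorial engine is a pigeonhole argument at a branching vertex of degree exceeding $k$: two outgoing paths must start with the same color, and concatenating them through the vertex yields a path whose coloring is itself $\alpha^+$-free. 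Iterating this argument for a suitably chosen $G_k$ (for instance a multi-legged spider or a subdivided complete graph) forces a repetition of exponent at least $\alpha_k$ along some path in the subdivision.

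\emph{Main obstacle.} The delicate step is verifying that the upper-bound coloring is globally $\alpha_k^+$-free. Along each subdivided path this is built in by the choice of $w_k$, but a non-intersecting path in $G'$ may traverse several subdivided edges, so a repetition could straddle one or more branching vertices. To prevent this, the inserted factors of $w_k$ should be chosen so that the patterns appearing near each branching vertex are rare in $w_k$, making long repetitions spanning a branching vertex incompatible with the $\alpha_k^+$-freeness of $w_k$. Establishing the existence of such factor choices, and proving that they glue into a globally $\alpha_k^+$-free coloring, is the heart of the proof and will require specific combinatorial properties of the chosen Dejean-type words.
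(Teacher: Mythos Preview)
The paper does not prove this theorem at all: Theorem~\ref{thm:sub} is quoted from~\cite{OchVas12} as a known result and is stated without proof. So there is nothing in the present paper to compare your proposal against.

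As for the proposal itself, it is an outline rather than a proof, and you say so yourself. The strategy you sketch is the natural one and is broadly what the cited paper does: Dejean-type words along subdivided edges for the upper bounds, and a pigeonhole argument at a high-degree vertex for the $\tfrac32$ lower bound when $k\ge4$ (indeed, any subdivision of $K_{1,k+1}$ keeps a vertex of degree $k+1$, forcing a factor $c'cc'$ of exponent $\tfrac32$). Two points deserve attention if you want to turn this into an actual proof. First, the ``main obstacle'' you flag is genuinely the whole difficulty: choosing factors of $w_k$ so that no long repetition can straddle a branching vertex is not automatic, and in~\cite{OchVas12} this is handled by carefully engineered constructions (specific morphic images or marker-based gluing), not by a soft ``rare pattern'' argument. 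Second, your lower bound for $k=2$ is left completely unspecified; showing $\RT(2,\mathcal{S})\ge\tfrac73$ requires a concrete analysis (in~\cite{OchVas12} this uses the structure of $\tfrac73$-free binary words near a branching vertex of degree~$3$), and the vague ``iterating this argument'' does not get you there.
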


\newpage

\begin{theorem}[\cite{OchVas12}]
  \label{thm:tree}
  ~
  \begin{itemize}
  \item[$(i)$] $\RT(2,\mathcal{T}) = \tfrac{7}{2}$;
  \item[$(ii)$] $\RT(3,\mathcal{T}) = 3$;
  \item[$(iii)$] $\RT(k,\mathcal{T}) = \tfrac{3}{2}$, for $k \ge 4$.
  \end{itemize}
\end{theorem}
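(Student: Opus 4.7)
The plan is to split into three cases by $k$, each requiring an upper bound (a good $k$-coloring of every tree) and a lower bound (a specific tree forcing the stated exponent).

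For the upper bounds, the main tool is Theorem~\ref{thm:path}, which for each $k$ provides an infinite word over $k$ letters avoiding every exponent strictly above $\RT(k)$. For $k\ge 4$, since $\RT(k)\le \tfrac{7}{5}<\tfrac32$, there is an infinite $\tfrac32^+$-free word over four letters, and one can color the tree by a depth-first traversal, writing prefixes of such a word along each root-to-leaf branch and inserting separator colors at branching vertices so that no long factor can cross a branch without hitting a controlled break. For $k=3$ with the generous target $3$, a ternary cube-free word handles each branch and the slack allows more freedom at branching vertices. For $k=2$ with target $\tfrac72$, we adapt an infinite $\tfrac72^+$-free binary word and orient each branch so that any path through a branching vertex reads a factor whose two halves are reflections compatible with the repetition-freeness of the underlying word.

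For the lower bounds, one must construct a specific tree forcing the exponent. The most natural construction is a \emph{spider}: a central vertex with many long legs. In a $k$-coloring, the legs become $k$-ary strings of some length; by pigeonhole, for large enough spiders many legs share long prefixes, and concatenating two matching legs through the center produces a path whose coloring is of the form $u^R u'$ where $u,u'$ agree on a long prefix, forcing a repetition. Tuning the leg count and length, together with a case analysis on the center's color and on the structure of the matches, should push the forced exponent up to $\tfrac72$ for $k=2$, to $3$ for $k=3$, and to $\tfrac32$ for $k\ge 4$.

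The hard part will be the upper bounds for $k\in\{2,3\}$ on trees of unbounded degree. At a vertex of degree $d$ there are $\binom{d}{2}$ paths passing through it, each a potential factor; with only two or three colors the room to avoid the target exponent on all of them simultaneously is very narrow. The most promising route is to exploit the morphic or automatic structure of the chosen infinite words (for instance fixed points of uniform substitutions), whose strong uniformity properties guarantee that the local adjustments forced at branching vertices do not propagate into long-range repetitions elsewhere in the tree.
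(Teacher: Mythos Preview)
This theorem is quoted from~\cite{OchVas12} and the present paper gives no proof of it; there is therefore no in-paper argument to compare your proposal against.

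On the proposal itself: what you have written is a plan rather than a proof, and several of the steps are hopes rather than ideas. The lower bound for $k\ge4$ is simpler than the spider-plus-case-analysis you sketch: the star $K_{1,k}$ already forces a factor $aca$ of exponent $\tfrac32$, as the paper itself remarks just after Theorem~\ref{th:cp3_k}. For $k=2$ and $k=3$ a pigeonhole argument on spiders is the right starting point, but ``tuning the leg count and length'' together with an unspecified case analysis does not yet pin down the exact constants $\tfrac72$ and $3$.

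The real gap is in the upper bounds. A path in a tree is not a root-to-leaf segment: it ascends to a common ancestor and then descends, so the factor it carries is one word concatenated with the reversal of another that shares its first letter. Your DFS-with-separators description does not address this shape at all, and ``inserting separator colors at branching vertices'' cannot work as stated when the degree is unbounded and the alphabet has fixed size. For $k\in\{2,3\}$ you explicitly concede you do not have a construction; invoking ``morphic or automatic structure'' and ``strong uniformity properties'' is not a substitute for one. If you want the actual argument, you will need to consult~\cite{OchVas12} directly.
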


In this paper, we continue the study of repetition thresholds of trees
under additional assumptions. In particular, we completely determine
the repetition thresholds for caterpillars of maximum degree 3
(\Cref{thm:cp2,thm:cp3,thm:cp4,th:cp3_k}) and for caterpillars of
unbounded maximum degree (\Cref{thm:cp2,thm:cp3}) for every alphabet
of size $k\ge 2$. We determine the repetition thresholds for trees of
maximum degree 3 for every alphabet of size $k\in\{4,5\}$
(\Cref{th:tree3_5}). We finally give a lower and an upper bound on the
repetition threshold for trees of maximum degree 3 for every alphabet
of size $k\ge 6$ (\Cref{th:tree}).  We summarize the results in
Table~\ref{tbl:sum} (shaded cells correspond to our results).
\begin{table}[htp]
  \begin{center}
    \begin{tabular}{|l|c|c|c|c|c|}
      \hline
      & $ |\mathbb{A}| = 2 $ & $|\mathbb{A}| = 3$ & $|\mathbb{A}| = 4$ & $|\mathbb{A}| = 5$ & $|\mathbb{A}| = k$, $k \ge 6$ \\
      \hline
      $\mathcal{P}$ & $2$ & $\nicefrac{7}{4}$  &  $\nicefrac{7}{5}$  &  $\nicefrac{5}{4}$  &  $\nicefrac{k}{k-1}$ \\
      \hline
      $\mathcal{C}$ & $\nicefrac{5}{2}$ & $2$ & $?$ & $?$ & $1+\tfrac1{\ceil{\nicefrac{k}{2}}}$  \\
      \hline
      $\mathcal{S}$ & $\nicefrac{7}{3}$ & $\nicefrac{7}{4}$ & $\nicefrac{3}{2}$ & $\nicefrac{3}{2}$ & $\nicefrac{3}{2}$ \\
      \hline   
      $\mathcal{CP}_3$ & \cc $3$ & \cc $2$ & \cc $\nicefrac{3}{2}$ & \cc $\nicefrac{4}{3}$ & \cc $1+\tfrac1{\ceil{\nicefrac{k}{2}}}$ \\
      \hline
      $\mathcal{T}_3$ & $?$ & $?$ & \cc $\nicefrac{3}{2}$ & \cc $\nicefrac{3}{2}$ & \cc $1 + \tfrac1{2\log k} + o\left(\tfrac1{\log k}\right)$\\
      \hline
      $\mathcal{CP}$ & \cc $3$ & \cc $2$ &  \cc $\nicefrac{3}{2}$ & \cc $\nicefrac{3}{2}$ & \cc $\nicefrac{3}{2}$ \\
      \hline
      $\mathcal{T}$ & $\nicefrac{7}{2}$ & $3$ &  $\nicefrac{3}{2}$ &  $\nicefrac{3}{2}$ &  $\nicefrac{3}{2}$ \\
      \hline
    \end{tabular} 
  \end{center}
  \caption{Summary of repetition thresholds for different classes of graphs. }
  \label{tbl:sum}
\end{table}

\section{Caterpillars}

A \textit{caterpillar} is a tree such that the graph induced by the
vertices of degree at least $2$ is a path, which is called \textit{backbone}.

\begin{theorem}
  \label{thm:cp2}
  $
  \RT(2,\CP)=\RT(2,\CP_3)=3.
  $
\end{theorem}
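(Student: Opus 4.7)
Because $\CP_3\subseteq\CP$, it suffices to prove the upper bound $\RT(2,\CP)\le 3$ and the lower bound $\RT(2,\CP_3)\ge 3$ separately. The plan is to give an explicit universal $2$-coloring for the upper bound and a combinatorial forcing argument for the lower bound.

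For the upper bound I would use the Thue-Morse word $t=t_0 t_1 \cdots$, an infinite binary overlap-free (hence $2^+$-free) word provided by Theorem~\ref{thm:path}. Given any caterpillar $C$, color the backbone by $c(v_i)=t_i$ and color every leaf attached to $v_i$ by the flipped letter $\overline{t_i}$. The factors of $C$ fall into four types: (a) backbone sub-paths, which are factors of $t$ and hence have exponent at most $2$; (b) one-leaf factors of the form $\overline{t_i}\,t_i t_{i+1}\cdots t_j$ and their reverses; (c) two-leaf factors $\overline{t_i}\,t_i t_{i+1}\cdots t_j\,\overline{t_j}$; and (d) length-$3$ factors $\overline{t_i}\,t_i\,\overline{t_i}$ arising from two leaves at the same vertex, relevant only when the degree exceeds~$3$. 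To check that every factor has exponent at most $3$, I would analyse a candidate repetition $pe$: if $pe$ lies in the Thue-Morse portion its exponent is at most $2$; if $pe$ spans a flipped endpoint, matching letters on each side of the period imposes on the enclosed backbone sub-word a period-$p$ structure of length at least $|pe|-2$, so $\exp(pe)>3$ would require a backbone factor of exponent strictly greater than $2$, contradicting overlap-freeness of $t$. The tight case is when $t_i t_{i+1} t_{i+2} t_{i+3}=abab$, where the two-leaf factor is exactly $b\cdot abab\cdot a=bababa$, a cube of exponent $3$.

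For the lower bound, consider the caterpillars $C_n\in\CP_3$ with backbone $v_1\cdots v_n$ and a leaf $\ell_i$ at every backbone vertex. The plan is to show that for all sufficiently large $n$, every $2$-coloring of $C_n$ contains a factor of exponent at least $3$, whence $\RT(2,\CP_3)\ge 3$. Assume a coloring has all factors of exponent $<3$. Then the backbone word $w$ is cube-free, and cube-freeness of every extended factor forces the following local constraints: (i)~if $w_iw_{i+1}=aa$ then $\ell_i=\ell_{i+1}=\overline a$, since otherwise $\ell_iw_iw_{i+1}$ or $w_iw_{i+1}\ell_{i+1}$ is the cube $aaa$; (ii)~if $w_i\cdots w_{i+3}=abab$ then $\ell_i=a$ or $\ell_{i+3}=b$, otherwise $\ell_i\cdot abab\cdot\ell_{i+3}=bababa$ is a cube; (iii)~if $w_i\cdots w_{i+4}=ababa$ then $\ell_i=a$ and $\ell_{i+4}=a$. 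Chasing (i)--(iii) along a backbone containing $aabab$ forces $\ell_i=b$ by (i), then $\ell_{i+3}=b$ by (ii); any continuation $w_{i+4}=a$ creates $ababa$ triggering (iii) with $\ell_i=a$, while $w_{i+4}=b$ creates a new $bb$ triggering (i) with $\ell_{i+3}=a$. Both branches contradict the forced leaf values, so any interior occurrence of $aabab$ (or its mirror $bbaba$) in the backbone is incompatible with the hypothesis.

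The main obstacle is closing this lower bound. The forcing argument eliminates interior occurrences of $aabab$ and $bbaba$, but one must still show that every sufficiently long binary cube-free word necessarily contains such an occurrence, and that the forcing also handles longer-period candidate cubes (such as $(aab)^3$ with period~$3$) so that no coloring can keep the attainable exponent bounded away from $3$. I expect the last step to require either a morphism-based structural analysis of binary cube-free sequences---showing that avoidance of $aabab$, $bbaba$, and the patterns produced by higher-period forcing is impossible for arbitrarily long words---or a direct finite, possibly computer-assisted, verification on a specific caterpillar $C_N$ of moderate size showing that every $2$-coloring of $C_N$ must contain a cube.
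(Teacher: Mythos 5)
Your upper bound is exactly the paper's construction (an overlap-free binary word on the backbone, each pendant vertex receiving the complementary colour of its neighbour), and your case analysis of the candidate repetitions is sound. The lower bound, however, contains a genuine gap, which you yourself flag: rules (i)--(iii) only eliminate interior occurrences of $aabab$ and $bbaba$ (and their reversals), and you are left needing the unproved claim that every long cube-free binary word avoiding these four factors still forces a cube in the caterpillar. That residual statement is the hard part, and neither the morphism analysis nor the computer search you defer to is carried out.

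The missing idea is that your forcing rules only extend a candidate cube through the \emph{leaves}, whereas the decisive extensions also run along the \emph{backbone}. Suppose four consecutive interior backbone vertices of degree $3$ are coloured $w_iw_{i+1}w_{i+2}w_{i+3}=abab$. The cube $bababa$ is completed by prepending \emph{any} neighbour of $w_i$ other than $w_{i+1}$ (either $\ell_i$ or $w_{i-1}$) coloured $b$, and appending any neighbour of $w_{i+3}$ other than $w_{i+2}$ coloured $a$. Avoiding every such completion forces either $\ell_i=w_{i-1}=a$, giving the cube $aaa$ on the path $w_{i-1}w_i\ell_i$, or $\ell_{i+3}=w_{i+4}=b$, giving $bbb$ on $\ell_{i+3}w_{i+3}w_{i+4}$. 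Hence $abab$ itself (not merely $aabab$) is impossible in the interior of the backbone. The same two-sided argument then kills $aba$ and $bab$: an interior $xyx$ must extend on the backbone to $xxyxx$ and then to $yxxyxxy$, and each of the two ways of blocking the period-$3$ cube $xyxxyxxyx$ again produces an $xxx$. A long binary word avoiding $xxx$ and $xyx$ must be a factor of $(0011)^\omega$, which contains the cube $(0110)^3$. This closes the lower bound with no residual claim about cube-free words avoiding $aabab$, and it is the route the paper takes.
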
 

\begin{proof}
  First, we show that the repetition threshold is at least $3$.  Note
  that it suffices to prove it for the class of caterpillars with
  maximum degree $3$. Suppose, to the contrary, that $\RT(2,\CP_3)<3$.
  Then, the factor $xxx$ is forbidden for any $x\in\A$.
  Therefore, in any $3$-free $2$-coloring, every vertex colored with
  $x$ has at most one neighbor colored with $x$.
  It follows that four consecutive backbone vertices of degree $3$ 
  cannot be colored $xyxy$ for any $x,y\in\A$,
  since the $3$-repetition $yxyxyx$ appears. 
  The factor $xyx$ is also forbidden.
  Indeed, $xyx$ must extend to $xxyxx$ on the backbone
  since $xyxy$ is forbidden. Then, $xxyxx$ must extend to $yxxyxxy$ on
  the backbone since $xxx$ is forbidden. 
  Finally, $yxxyxxy$ must extend to the $3$-repetition $xyxxyxxyx$ in the
  caterpillar. Thus, the binary word on the backbone must avoid $xxx$
  and $xyx$. So, this word must be $(0011)^\omega$ which is not
  $3$-free, a contradiction. Hence, $\RT(2,\CP_3)\ge3$.

  Now, consider a $2$-coloring of an arbitrary caterpillar such that the
  backbone induces a $2^+$-free word (which exists by
  \Cref{thm:path}$(i)$) and every pendent vertex gets the
  color distinct from the color of its neighbor. Clearly, this
  $2$-coloring is $3^+$-free, and so
  $\RT(2,\CP)\le3$.
\end{proof}

\begin{theorem} 
  \label{thm:cp3}
  $
  \RT(3,\CP)=\RT(3,\CP_3)=2.
  $
\end{theorem}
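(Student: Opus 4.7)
The plan is to establish the two inequalities $\RT(3,\CP)\le 2$ and $\RT(3,\CP_3)\ge 2$; since $\CP_3\subseteq\CP$, these together yield $\RT(3,\CP)=\RT(3,\CP_3)=2$.

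For the upper bound, I intend to color the backbone of an arbitrary caterpillar using an infinite $\tfrac{7}{4}^+$-free ternary word (which exists by \Cref{thm:path}$(ii)$) and to color each pendant identically to its backbone neighbor; at any backbone endpoint supporting two pendants, one of them receives a color distinct from $b_i$ so as to avoid the cube $xxx$ on the $3$-vertex path through both pendants. Backbone-only factors are immediately $\tfrac{7}{4}^+$-free, hence $2^+$-free. The remaining factors begin (or, symmetrically, end) with a pendant and therefore start with $xx$ where $x=b_i$: this is a square of exponent exactly $2$, which is allowed. The key verification will be that no repetition of exponent strictly greater than $2$ can begin at position $1$ with any period $k$, since such a repetition would match positions $1,2$ with positions $k{+}1,k{+}2$ and thus force the consecutive backbone vertices $v_{i+k-1}$ and $v_{i+k}$ to both receive color $x$, contradicting the backbone being square-free.

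For the lower bound, the plan is to exhibit a caterpillar $T\in\CP_3$ with no square-free $3$-coloring. Let $T$ have a backbone $v_1,\ldots,v_n$ of sufficient length and a pendant $u_i$ at every interior vertex. In any putative square-free coloring, the backbone word $b_1\cdots b_n$ must be ternary and square-free, and at each $i$ the pendant color $u_i$ must satisfy $u_i\ne b_i$ (else $u_iv_i$ is the square $xx$), $u_i\ne b_{i-1}$ whenever $b_{i-2}=b_i$ (else $b_{i-2}b_{i-1}b_iu_i$ is a square), and $u_i\ne b_{i+1}$ whenever $b_{i+2}=b_i$ (symmetric). Whenever $b_{i-2}=b_i=b_{i+2}$, the square-freeness of the backbone additionally forces $b_{i-1}\ne b_{i+1}$ (otherwise $b_{i-1}b_ib_{i+1}b_{i+2}$ is a square), so the three values $b_i,b_{i-1},b_{i+1}$ cover the whole alphabet and no color is available for $u_i$.

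What remains, and what I expect to be the main obstacle, is the following combinatorial lemma: every square-free ternary word of length at least some absolute constant $L$ contains a factor $xyxzx$ with $x,y,z$ pairwise distinct (equivalently, a position $i$ with $b_{i-2}=b_i=b_{i+2}$). To prove it, I would first observe that any square-free ternary word of length $\ge 6$ contains a factor $xyx$: otherwise each letter is forced to equal the unique third letter distinct from its two predecessors, producing the period-$3$ word $b_1b_2b_3b_1b_2b_3\cdots$ which already contains the square $(b_1b_2b_3)^2$. Once $xyx$ appears, square-freeness forces the next letter to be $z$ (the value $z=x$ violates non-repetition, while $z=y$ produces the square $xyxy$), and avoiding the target pattern $xyxzx$ then forces the subsequent letter to be $y$, giving the forced extension $xyxzy$. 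Recursively applying this forcing to the new $aba$-factors that emerge produces a binary branching tree of admissible continuations; I expect each branch to close within a constant number of steps by producing either an unavoidable square or the target pattern, yielding the uniform bound $L$ (if need be, by a finite computer verification). Choosing $n\ge L+2$ then gives the desired caterpillar in $\CP_3$.
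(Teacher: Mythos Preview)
Your overall plan (upper bound on $\CP$, lower bound on $\CP_3$) matches the paper's, but both halves diverge from the paper's arguments, and the upper bound has a real gap.

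\textbf{Upper bound.} In $\CP$ an \emph{interior} backbone vertex may carry arbitrarily many pendants, not just the endpoints. Your rule ``colour each pendant identically to its backbone neighbour'' then puts the same colour $x=b_i$ on two pendants of the same $v_i$, and the $3$-path through them is the cube $xxx$. Your exception clause only treats endpoints with two pendants, so the construction as stated is not $2^+$-free on general caterpillars. Even after patching this, you still owe a verification for factors that start at a pendant coloured $y\neq b_i$ (your ``positions $1,2$ both equal $x$'' argument no longer applies), and for pendant--to--pendant paths. The paper avoids all of this with a cleaner construction: put a $2^+$-free \emph{binary} word on the backbone (\Cref{thm:path}$(\ref{thm:path_2})$) and give every pendant the third colour. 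Then in any path the colour~$3$ can occur only at the first and/or last position, so a repetition involving a pendant has period at least $(\text{length}-1)$ and hence exponent at most $2$; backbone-only factors are already $2^+$-free.

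\textbf{Lower bound.} Your reduction is correct: if the backbone contains a factor $xyxzx$ then the pendant at the central $x$ is blocked. But isolating the purely word-combinatorial lemma ``every sufficiently long ternary square-free word contains $xyxzx$'' is a detour that, as you concede, may need a machine check. The paper gets there faster by \emph{using the pendant constraints inside the forcing} rather than postponing them: once $xyx$ appears on the backbone, the pendant at the middle vertex, the next backbone vertex, the next pendant, etc., are successively forced, and after only a few steps a square $yxzyxz$ appears on a backbone--pendant path. Folding the pendants into each step prunes the case tree that your backbone-only forcing has to explore.
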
 
\begin{proof}
  We start by proving $\RT(3,\CP_3)\ge2$. So, suppose, for a
  contradiction, that there is a $2$-free $3$-coloring for any
  caterpillar with maximum degree $3$. In every $2$-free $3$-coloring,
  the factor $xyx$ appears on the backbone, since otherwise the word
  on the backbone would be $(012)^\omega$ which is not $2$-free. Then,
  we have no choice to extend the factor $xyx$ to the right
  (see~\Cref{fig:k3cat}). This induces a $2$-repetition $yxzyxz$.

  \begin{figure}
    \centering 
    \includegraphics[scale=0.7]{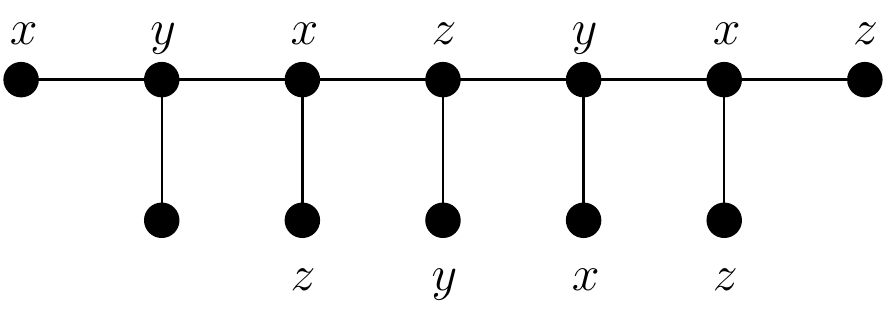}
    \caption{After a factor $xyx$, the remaining colors are forced.}
    \label{fig:k3cat}
  \end{figure}
  
  Now, we show that $\RT(3,\CP)\le2$ by constructing a $2^+$-free
  $3$-coloring of an arbitrary caterpillar. Take a $2^+$-free
  $2$-coloring of the backbone (which exists by Theorem~\ref{thm:path}),
  and color the pendent vertices with the third color.
\end{proof}

\begin{theorem} 
  \label{thm:cp4}
  $
  \RT(4,\CP_3)=\tfrac32.
  $
\end{theorem}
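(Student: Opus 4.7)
My plan is to prove $\RT(4,\CP_3)=\tfrac{3}{2}$ by handling the two inequalities separately. The upper bound $\RT(4,\CP_3)\le\tfrac{3}{2}$ comes for free: every caterpillar of maximum degree~$3$ is a tree, so $\CP_3\subseteq\T$ and $\RT(4,\CP_3)\le\RT(4,\T)=\tfrac{3}{2}$ by \Cref{thm:tree}$(iii)$. All the work therefore lies in the lower bound.

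For the lower bound I will argue by contradiction. The plan is to exhibit a specific caterpillar in $\CP_3$---a path backbone of length at least~$7$ with a pendant vertex attached to every internal backbone vertex---and show that it admits no $\tfrac{3}{2}$-free $4$-coloring. So suppose such a coloring exists; then the backbone itself is a $\tfrac{3}{2}$-free $4$-ary word, avoiding in particular the repetitions $xx$, $xyx$, $xyzxy$, and $xyzwxy$ (of exponents $2$, $\tfrac{3}{2}$, $\tfrac{5}{3}$, and $\tfrac{3}{2}$, respectively). The first step is to show that the backbone must satisfy $b_i=b_{i-3}$ for some $i\in\{4,5,6\}$. Indeed, if no such $i$ existed, then for each $i\in\{4,5,6\}$ the letter $b_i$ would be forced---as the unique color outside the three distinct predecessors $b_{i-1}$, $b_{i-2}$, $b_{i-3}$---which makes the backbone periodic of period~$4$, so the length-$6$ prefix reads $xyzwxy$, a forbidden $\tfrac{3}{2}$-repetition.

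The second step analyzes the pendant $p$ attached to such a vertex $b_i$. The short paths $pb_i$, $pb_ib_{i-1}$, and $pb_ib_{i+1}$ forbid $p\in\{b_i,b_{i-1},b_{i+1}\}$, and the length-$5$ path $pb_ib_{i-1}b_{i-2}b_{i-3}$ is the $\tfrac{5}{3}$-repetition $xyzxy$ exactly when $p=b_{i-2}$ and $b_i=b_{i-3}$; since the latter holds by the choice of $i$, we also get $p\ne b_{i-2}$. It remains to check that these four forbidden values are pairwise distinct. The three values $b_{i-1},b_i,b_{i+1}$ are distinct (else the backbone would contain $xyx$), $b_{i-2}$ differs from $b_{i-1}$ and $b_i$ for the same reason, and crucially $b_{i-2}\ne b_{i+1}$---otherwise the five backbone positions $i-3,\dots,i+1$ would spell $abcab$, another forbidden $\tfrac{5}{3}$-repetition. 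Therefore $p$ would have to avoid every color in the $4$-letter alphabet, a contradiction. The main obstacle I foresee is purely length accounting, namely ensuring that a suitable $i$ is interior so that $b_{i-1}$, $b_{i+1}$, and the pendant at $b_i$ all exist in the caterpillar, which is handled by the initial choice of backbone length; the substantive combinatorial content is the periodicity argument confining the first distance-$3$ equality to the window $\{4,5,6\}$.
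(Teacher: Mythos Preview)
Your proof is correct. The upper bound is handled exactly as in the paper (via $\CP_3\subseteq\T$ and \Cref{thm:tree}$(iii)$), and your lower-bound argument is sound: the periodicity step correctly forces some $i\in\{4,5,6\}$ with $b_i=b_{i-3}$, and the four constraints on the pendant $p$ at $b_i$ are justified, with the key observation $b_{i-2}\ne b_{i+1}$ coming from the backbone factor $b_{i-3}\cdots b_{i+1}=abcab$. Your length bookkeeping (backbone of length $\ge 7$, $b_i$ internal) is also fine.

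The paper's own argument is organized differently. Instead of first locating a distance-$3$ coincidence on the backbone, it starts from the ``cross'' $u_1,u_2,u_3,v_2$ around a single pendant, observes that these four vertices already exhaust the alphabet, and then pushes outward: one of $u_0,u_4$ is forced to take the pendant's color $t$ (otherwise $u_0\cdots u_4$ reads $zxyzx$), and then one of $u_5,v_4$ is forced to be $y$, producing the $\tfrac53$-repetition $tyzty$. So the paper uses the pendant as the \emph{starting} constraint and only six backbone vertices, whereas you use the pendant as the \emph{final} obstruction after a backbone periodicity argument on seven vertices. Both routes end at an $xyzxy$-type $\tfrac53$-repetition; the paper's is a touch shorter, while yours has the minor advantage that the first step (forcing a distance-$3$ repeat on the backbone) is a clean standalone fact about $\tfrac32$-free $4$-ary words.
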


\begin{proof}
  By \Cref{thm:tree}$(iii)$, we have $\RT(4,\CP_3)\le\tfrac32$.  
  Let us show that any $4$-coloring $c$ of a caterpillar of maximum degree
  3 contains a $\tfrac32$-repetition. Consider six consecutive
  vertices $u_0,u_1,u_2,u_3,u_4,u_5$ of the backbone. Let $v_i$ be the pendent
  neighbors of $u_i$. In any $\tfrac32$-free coloring, the vertices
  $u_1,u_2,u_3,v_2$ must get distinct colors: say $c(u_1)=x, c(u_2)=y,
  c(u_3)=z, c(v_2)=t$. Either $u_0$ or $u_4$ must be colored with
  color $t$; w.l.o.g. assume $c(u_4)=t$. Then, either $u_5$ or $v_4$
  must be colored by $y$, and we obtain the $\tfrac53$-repetition $tyzty$.
\end{proof}

\begin{lemma}
  \label{lem:cp35low}
  For every integer $k\ge5$, we have
  $\RT(k,\CP_3)\ge1+\tfrac1{\ceil{\nicefrac k2}}$.
\end{lemma}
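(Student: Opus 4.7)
The plan is to argue by contradiction. Set $m = \lceil k/2 \rceil$. Since $\RT(k,\CP_3)$ is non-increasing in $k$ (any $k$-coloring is also a $(k+1)$-coloring), it suffices to treat the even case $k = 2m$; the odd case $k = 2m-1$ then follows immediately by monotonicity. Let $G \in \CP_3$ be an arbitrarily long caterpillar in which every internal backbone vertex $u_i$ carries a pendant $v_i$, and suppose for contradiction that $c$ is a $2m$-coloring of $G$ with every factor of exponent strictly less than $1 + 1/m$.

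The hypothesis forces any two vertices of $G$ at tree-distance at most $m$ to receive distinct colors. Writing $W_i := \{c(u_{i-m+1}), \ldots, c(u_{i+m-1})\}$ for the backbone window of size $2m-1$ around $v_i$, we have $c(v_i) \notin W_i$, $c(v_i) \neq c(v_j)$ whenever $0 < |i - j| \leq m - 2$, and any coincidence inside $W_i$ must come from a backbone pair at index-distance strictly greater than $m$.

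In the generic case $|W_i| = 2m-1$ for every $i$, each $c(v_i)$ is forced to equal the unique color $M_i$ missing from $W_i$. Comparing $W_{i+1}$ and $W_i$, which differ only in $c(u_{i-m+1})$ versus $c(u_{i+m})$, I note that $c(u_{i+m}) = c(u_{i-m+1})$ would give $W_{i+1} = W_i$ and hence $v_i = v_{i+1}$, contradicting the constraint $c(v_i) \neq c(v_{i+1})$ (since $d(v_i, v_{i+1}) = 3 \leq m$), while $c(u_{i+m}) \in W_i \setminus \{c(u_{i-m+1})\}$ would shrink $|W_{i+1}|$ to $2m-2$, violating genericity. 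Therefore $c(u_{i+m}) = M_i$, and the symmetric argument comparing $W_{i-1}$ with $W_i$ gives $c(u_{i-m}) = M_i$. Consequently $c(u_{i-m}) = c(u_{i+m})$ for every $i$, so the backbone is periodic of period $2m$ and uses all $2m$ colors in each period. The backbone factor $u_0 u_1 \cdots u_{2m+1}$ of length $2m+2$ is then a repetition of period $2m$ and exponent $(2m+2)/(2m) = 1 + 1/m$, contradicting the hypothesis.

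In the non-generic case, some window $W_{i_0}$ satisfies $|W_{i_0}| \leq 2m - 2$, so $c(v_{i_0})$ has at least two available colors. Propagating the pendant constraints along the backbone---tracking the forced or double choices at neighbouring positions together with the differences $c(v_i) \neq c(v_{i \pm 1})$---forces a cascade of coincidences at regularly spaced backbone indices, pinning the backbone into a periodic pattern from which a sufficiently long factor exhibits a repetition of exponent at least $1 + 1/m$. The main obstacle is this non-generic case analysis: one must verify that each possible index-distance $d \in (m, 2m-2]$ at which an initial coincidence occurs propagates consistently and enumerate the resulting finitely many sub-cases to confirm each produces a bad factor. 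Once both cases are dispatched, $\RT(2m,\CP_3) \geq 1 + 1/m$, and by monotonicity the lemma holds for all $k \geq 5$.
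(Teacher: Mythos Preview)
Your generic-case argument is clean and correct, but the non-generic case is not a proof: you explicitly flag it as ``the main obstacle'' and then merely assert that a cascade ``forces'' periodicity without carrying out any of the promised enumeration over coincidence distances $d\in(m,2m-2]$. Nothing guarantees that a single window with $|W_{i_0}|\le 2m-2$ propagates to a long periodic stretch rather than being an isolated defect absorbed by the two degrees of freedom now available for $c(v_{i_0})$; until you actually run that case analysis, the lemma is unproved by your method.

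The paper avoids this entirely with a short local argument. Writing $\eta=\lceil k/2\rceil$, one exhibits a set of $2\eta$ vertices in a finite caterpillar that are pairwise at distance at most $\eta$ (take $\eta+1$ consecutive backbone vertices together with the $\eta-1$ pendants attached to the interior ones). Any $(1+\tfrac1\eta)$-free coloring must give them pairwise distinct colors. For odd $k$ this already needs $2\eta=k+1>k$ colors, a contradiction. For even $k$ one has $2\eta=k$, so the block is rainbow; adding one more backbone vertex $y$ and the two vertices adjacent to it beyond the block then forces, after a two-line case check, a repetition with excess of length $2$ and period $\eta+1$, hence exponent $1+\tfrac2{\eta+1}>1+\tfrac1\eta$. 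No infinite caterpillar, no window tracking, no generic/non-generic split. Your monotonicity reduction to even $k$ is fine, but it actually makes your life harder: the paper's configuration handles the odd case in one line and the even case in three, whereas your route through global periodicity leaves the hardest part open.
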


\begin{proof}

  \begin{figure}[htp!]
    \centering
    \subfloat[\label{fig:k5catodd}There exist $2\eta$
    vertices at distance at most~$\eta$ from each other.]{\includegraphics[scale=0.7]{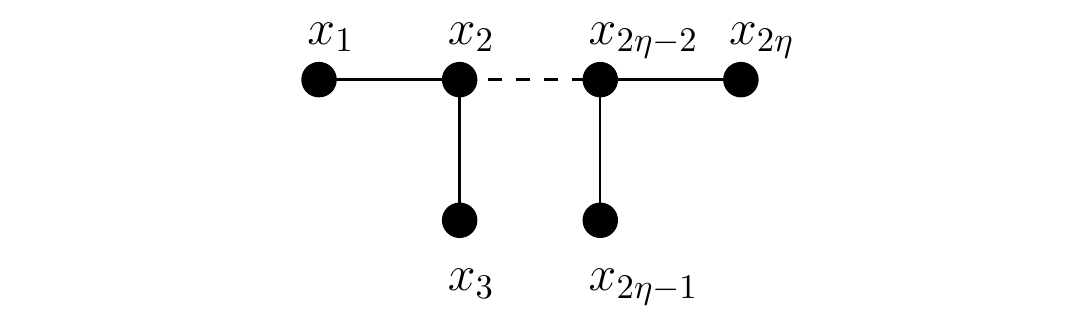}}
    $\qquad\qquad$\subfloat[Even case.\label{fig:k5cateven}]{\includegraphics[scale=0.7]{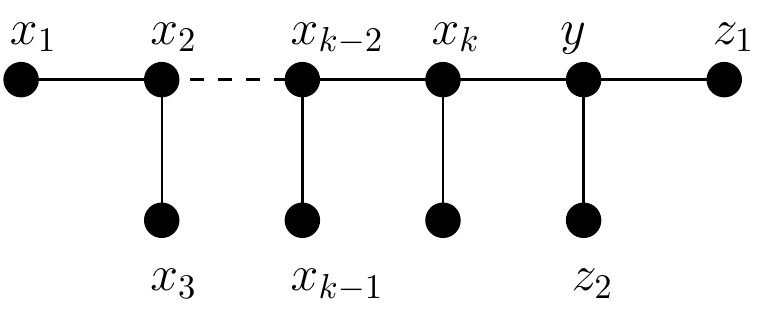}}
    \caption{Illustrations of~\Cref{lem:cp35low}.}
  \end{figure}

  Let $\eta=\ceil{\tfrac k2}$.  Suppose, to the contrary, that there
  exists a $(1+\tfrac1\eta)$-free $k$-coloring $c$ for any caterpillar
  with maximum degree $3$.  Then, every two vertices at distance at
  most $\eta$ must be colored differently. In caterpillars with
  maximum degree $3$, we can have $2\eta$ vertices being pairwise at
  distance at most $\eta$ (see~\Cref{fig:k5catodd}).  If $k$ is odd,
  then $2\eta>k$, and thus $c$ is not $(1+\tfrac1\eta)$-free. If $k$
  is even, the vertices $x_i$ of~\Cref{fig:k5cateven} necessarily get
  distinct colors, say $x_i$ gets color $i$. Then, we have
  $c(y)\in\{1,3\}$ and w.l.o.g. $c(y)=1$. We also have
  $2\in\{c(z_1),c(z_2)\}$ and w.l.o.g. $c(z_1)=2$. Then we obtain a
  $\paren{1+\tfrac2{\eta+1}}$-repetition with excess $c(y)c(z_1)=12$,
  a contradiction.
\end{proof}

\begin{lemma} 
  \label{lem:cp35up}
  $
  \RT(5,\CP_3)\le\tfrac43.
  $ 
\end{lemma}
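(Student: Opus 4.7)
The plan is to construct a $(4/3)^+$-free $5$-coloring of an arbitrary maximum-degree-$3$ caterpillar, which reduces to coloring the infinite ``comb'' graph with backbone $u_0 u_1 u_2 \cdots$ and a leaf $v_i$ at each $u_i$. A preliminary observation is that $(4/3)^+$-freeness forbids both $xx$ and $xyx$ (of exponents $2$ and $\tfrac{3}{2}$), so any admissible coloring is a proper distance-$2$ coloring; in particular, three consecutive letters along any simple path must be distinct.

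For the backbone I would take $c(u_i) = w_i$, where $w = w_0 w_1 w_2 \cdots$ is an infinite $(5/4)^+$-free word over $\{0,1,2,3,4\}$ guaranteed by \Cref{thm:path}(iv). Since $\tfrac{5}{4} < \tfrac{4}{3}$, the backbone is automatically $(4/3)^+$-free; moreover, $(5/4)^+$-freeness excludes the factors $xx$, $xyx$, and $xyzx$ (whose exponents $2,\tfrac{3}{2},\tfrac{4}{3}$ all strictly exceed $\tfrac{5}{4}$), so any four consecutive letters of $w$ are pairwise distinct. Consequently, the palette $P_i = \{0,1,2,3,4\} \setminus \{w_{i-1}, w_i, w_{i+1}\}$ available to pendant $v_i$ has exactly two colors, and both $w_{i-2}$ and $w_{i+2}$ lie in $P_i$. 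I would then define $c(v_i) \in P_i$ by a local rule depending on a bounded window $w_{i-4}, \dots, w_{i+4}$, chosen to additionally avoid the length-$5$ pattern $xyzxy$ (exponent $\tfrac{5}{3}$), the length-$6$ pattern $xyzwxy$ (exponent $\tfrac{3}{2}$), and every longer factor $pe$ with $|e| > |p|/3$ on any simple path through $v_i$.

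Verification is a case analysis by the shape of the path carrying a potential repetition: backbone-only paths are handled directly by the $(5/4)^+$-freeness of $w$; paths with one pendant endpoint reduce to appending a single extra letter to a prefix or suffix of $w$, subject to the pendant rule; paths with two pendant endpoints couple the constraints at both ends through the backbone in between. This last case will be the main obstacle, since a would-be $(4/3)^+$-repetition may align matching $xy$-prefixes and $xy$-suffixes at two pendants across a long backbone segment. The key insight is that $(5/4)^+$-freeness of $w$ imposes strong structural restrictions on repeated patterns (for instance $w_i = w_{i+3}$ is impossible, and $w_i = w_{i+4}$ forces $w_{i+1} \ne w_{i+5}$); combining this with the fact that a $(4/3)^+$-repetition of period length $m$ inside a factor of length $k$ forces a backbone subfactor of exponent at least $4k/(5m)$, one can bound the length of any offending repetition by a constant and reduce the verification to a finite check, establishing $\RT(5,\CP_3) \le \tfrac{4}{3}$.
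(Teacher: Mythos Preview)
Your outline has the right large-scale shape—anchor everything to a $(5/4)^+$-free Dejean word $w$ over five letters and argue that any long $(4/3)^+$-repetition in the caterpillar would shed at most two endpoint letters and still leave a factor of exponent $>5/4$ on the backbone, reducing matters to a finite check—but the crucial step is missing. You assert that the pendant colors can be fixed ``by a local rule depending on a bounded window $w_{i-4},\dots,w_{i+4}$'' so as to kill the finitely many remaining short patterns, yet you neither exhibit such a rule nor prove one exists. This is not a formality: the two-element palette $P_i$ varies with $w$, the forbidden short patterns couple $c(v_i)$ to $c(v_{i\pm2}),c(v_{i\pm3}),\dots$ across many overlapping windows, and nothing you have said rules out the possibility that no local selection over the raw Dejean word satisfies all constraints simultaneously. (Your final exponent claim ``$4k/(5m)$'' is also garbled; the actual bound is that a length-$k$, period-$m$ repetition in the caterpillar yields a length $\ge k-2$, period-$m$ backbone factor, so $k-2\le\tfrac54 m$ and hence $m\le23$. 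That part is routine; the local rule is not.)

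The paper does \emph{not} place $w$ directly on the backbone. Instead it passes to the Pansiot code $p$ of $w$ and applies an explicit length-$18$ block substitution: four offset tables $h[t][\ell]$ (indexed by the Pansiot bit $t\in\{0,1\}$ and by backbone/leaf $\ell\in\{0,1\}$) determine both rows of the coloring via $c[\ell][18i+j]=w_{\,i+h[p_i][\ell][j]}$. The finite verification then runs over all factors of length at most $576=32\times18$, after which a longer repetition descends through the block structure to a repetition of exponent $>5/4$ in $w$. The concrete offset tables and the block length $18$ are the substance of the argument; your proposal replaces them by an unspecified local rule whose existence is precisely what must be established, and that is the gap.
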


\begin{proof}
  We start from a right infinite $\tfrac54^+$-free word
  $w=w_0w_1\ldots$ on $5$ letters.  We associate to $w$ its Pansiot
  code $p$ such that $p_i=0$ if $w_i=w_{i+4}$ and $p_i=1$ otherwise,
  for every $i\ge0$~\cite{Pan84}.  Let us construct a
  $\tfrac43^+$-free $5$-coloring $c$ of the infinite caterpillar such
  that every vertex on the backbone has exactly one pendant vertex.
  For every $i\ge0$, $c[0][i]$ is the color of the $i$-th backbone
  vertex and $c[1][i]$ is the color of the $i$-th pendant vertex.

  We define below the mapping $h[t][\ell]$ such that $t\in\acc{0,1}$  
  corresponds to the type of transition in the Pansiot code and
  $\ell\in\acc{0,1}$ corresponds to the type of vertex ($\ell=0$ for  
  backbone, $\ell=1$ for leaf):

  \begin{center}
    \begin{tabular}{c}
      $h[0][0]=150251053150352053$\\
      $h[0][1]=033332322221211110$\\
      $h[1][0]=143123021324123103$\\
      $h[1][1]=000044440400004444$\\
    \end{tabular}
  \end{center}
  
  Notice that the length of $h[t][\ell]$ is $18$. Given
  $t\in\acc{0,1}$ and $\ell\in\acc{0,1}$, let us denote
  $h[t][\ell][j]$, for $j\in\acc{0,\ldots,17}$, the $j^{\rm{th}}$
  letter of $h[t][\ell]$ (e.g. $h[0][0][3] = 2$).

  The coloring is defined by $c[\ell][18i+j]=w_{i+h[p_i][\ell][j]}$
  for every $\ell\in\acc{0,1}$, $i\ge0$, and $j\in\acc{0,\ldots,17}$.
  Let us prove that this coloring is $\tfrac43^+$-free.

  We check exhaustively that there exists no forbidden repetition of
  length at most 576 in the caterpillar. Now suppose for contradiction
  that there exists a repetition $r$ of length $n>576$ and exponent
  $\tfrac n d>\tfrac43$ in the caterpillar. This implies that there
  exists a repetition of length $n'\ge n-2$ and period of length $d$ in
  the backbone.  This repetition contains a repetition $r'$ consisting of
  full blocks of length 18 having length at least
  $n'-2\times({18-1})\ge n-36$ and period length $d$. Given $n>576$
  and $\tfrac n d>\tfrac43$, the repetition $r'$ has exponent at least $\tfrac{n-36}{d}>\tfrac54$.

  The repetition $r'$ in the backbone
  implies a repetition of exponent greater than $\tfrac54$ in $w$, which is a
  contradiction.
\end{proof}

\begin{lemma} 
  \label{lem:cp36up}
  For every integer $k\ge6$, we have 
  $
  \RT(k,\CP_3)\le1+\tfrac1{\ceil{\nicefrac k2}}.
  $
\end{lemma}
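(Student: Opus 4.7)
The plan is to follow the blueprint of \Cref{lem:cp35up}: construct a $(1+\tfrac{1}{\eta})^+$-free $k$-coloring of the infinite caterpillar in which every backbone vertex has exactly one pendant, where $\eta=\ceil{\nicefrac{k}{2}}$. Since any caterpillar of maximum degree $3$ is a subgraph of such a caterpillar, this yields the desired upper bound.

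First I would fix an infinite $\tfrac{k}{k-1}^+$-free word $w=w_0w_1\dots$ over $k$ letters, which exists for $k\ge 5$ by \Cref{thm:path}$(iv)$. One checks that $\tfrac{k}{k-1}\le 1+\tfrac{1}{\eta}$ whenever $k-1\ge \eta$, which holds for all $k\ge 6$; in particular, $w$ itself already meets the target bound as an infinite sequence. It remains to extend $w$ into a coloring of the pendants so that no repetition of exponent strictly greater than $1+\tfrac{1}{\eta}$ appears along any path that uses one or two leaves. A naive offset such as $c(v_i)=w_{i+\eta}$ does not quite work because a path traversing two pendants can create a period-$(\eta+1)$ factor of length $\eta+3$ whenever the backbone happens to contain a period-$2\eta$ match, whose exponent $\tfrac{\eta+3}{\eta+1}$ strictly exceeds $1+\tfrac{1}{\eta}$. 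To avoid this phenomenon, I would introduce, as in \Cref{lem:cp35up}, a generalized Pansiot code $p$ of $w$ (with $p_i=0$ if $w_i=w_{i+k-1}$ and $p_i=1$ otherwise) together with four block-encoding functions $h[t][\ell]$ for $t,\ell\in\{0,1\}$ (with $t$ the transition type and $\ell$ indicating backbone vs. leaf), each of some fixed length $N=N(k)$. The coloring is then defined by
\[
c[\ell][Ni+j] \;=\; w_{i+h[p_i][\ell][j]}
\qquad\text{for all } \ell\in\{0,1\},\ i\ge 0,\ j\in\{0,\dots,N-1\}.
\]

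The $(1+\tfrac{1}{\eta})^+$-freeness verification then splits into two parts, exactly as in \Cref{lem:cp35up}: short repetitions, of length at most some explicit constant multiple of $N$, are ruled out by an exhaustive check over the finitely many local configurations of the Pansiot bits $p_i$ that the repetition can span; long repetitions are handled by a scaling argument showing that a bad repetition of length $n$ and period $d$ in the caterpillar would induce, after trimming $O(N)$ letters at each end, a repetition of length at least $n-O(N)$ and period $d$ in the backbone word $w$, yielding for $n$ sufficiently large an exponent exceeding $\tfrac{k}{k-1}$ and contradicting the choice of $w$. The main obstacle is the explicit design of the block-encoding $h[t][\ell]$ together with the completion of the finite case analysis. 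Compared with the $k=5$ case the gap $1+\tfrac{1}{\eta}-\tfrac{k}{k-1}=\tfrac{k-1-\eta}{\eta(k-1)}$ is positive and grows with $k$, which should permit shorter and more uniform block functions; nevertheless the explicit case analysis, likely computer-assisted, is what ultimately has to be carried out for every $k\ge 6$.
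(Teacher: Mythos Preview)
Your proposal is not a proof but a plan, and the plan has a structural gap: you reduce the lemma to finding, for each $k\ge6$, block-encoding functions $h[t][\ell]$ and then running a finite computer check. Even if such encodings exist, this is infinitely many separate constructions and verifications; you provide neither a uniform family of encodings parametrized by $k$ nor an argument that a suitable encoding can always be found. As written, the approach cannot establish the statement for all $k\ge6$ at once.

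The paper bypasses all of this with a short, uniform, computation-free construction. For odd $k$ (even $k$ reduces to $k-1$), set $\eta=\ceil{k/2}$ and \emph{partition} the alphabet: use $\eta+1$ colors on the backbone and the remaining $\eta-2$ colors on the leaves, the latter assigned cyclically along the backbone. Since $\eta+1\ge5$, \Cref{thm:path}$(iv)$ supplies a $(1+\tfrac1\eta)^+$-free word on $\eta+1$ letters for the backbone. Any factor avoiding pendants is then already $(1+\tfrac1\eta)^+$-free. A factor containing a pendant has a leaf color at an endpoint; because leaf colors never occur on the backbone, such a factor can only be a repetition when both endpoints are pendants of the same color, which forces the excess to have length~$1$ and, by the cyclic spacing of the $\eta-2$ leaf colors, the period to have length at least~$\eta$. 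The Pansiot-code machinery of \Cref{lem:cp35up} is needed at $k=5$ only because there $\eta+1=4$ and $\RT(4)=\tfrac75>\tfrac43=1+\tfrac1\eta$, so no Dejean word on $\eta+1$ letters meets the target; for odd $k\ge7$ this obstruction disappears and the elementary argument above suffices.
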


\begin{proof}
  \begin{figure}[htp!]
    \centering
    \includegraphics[scale=0.7]{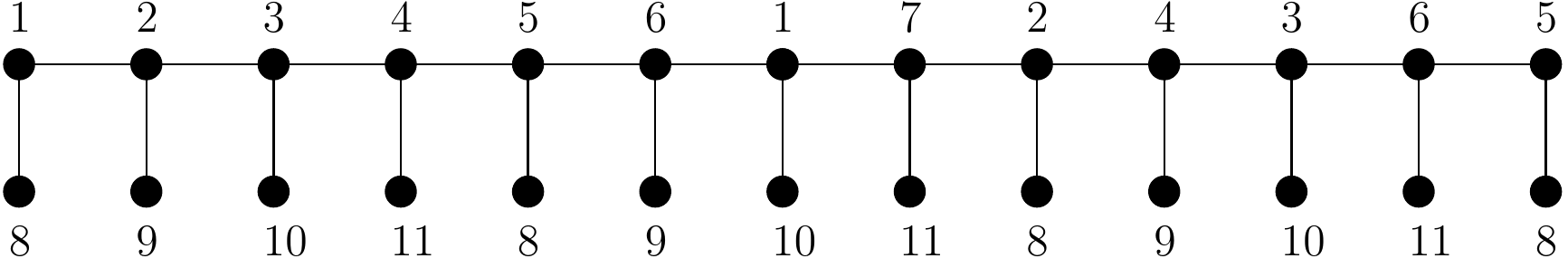}
    \caption{A $(1+\tfrac16)^+$-free $11$-coloring of a caterpillar
      with maximum degree $3$.} 
    \label{fig:k6catup}
  \end{figure}    

  First notice that it suffices to construct colorings for odd $k$'s, since
  $1+\tfrac1{\ceil{\nicefrac k2}}=1+\tfrac1{\ceil{\nicefrac{(k-1)}2}}$
  for $k$ even. So, let $k$ be odd and let $\eta=\ceil{\tfrac k2}$.

  By~\Cref{thm:path}$(iv)$, we can color the vertices of the backbone
  by a $(1+\tfrac1\eta)^+$-free $(\eta+1)$-coloring. Then, it remains
  to color the pendant vertices: let us color them cyclically using
  the remaining $k-(\eta + 1) = \eta-2$ unused colors
  (see~\Cref{fig:k6catup} for an example with $k=11$). Clearly, the
  repetition which does contain a pendant vertex are
  $(1+\tfrac1\eta)^+$-free. Moreover, for a repetition
  containing a pendant vertex, the length of the excess is at most $1$
  and the period length is at least $\eta$. Thus, its exponent is at
  most $\tfrac{\eta+1}{\eta}=1+\tfrac1{\eta}$.

  This shows that this $k$-coloring is $(1+\tfrac1\eta)^+$-free.
\end{proof}

\Cref{lem:cp35low,lem:cp35up,lem:cp36up} together imply the following theorem.

\begin{theorem}\label{th:cp3_k}
  For every integer $k$, with $k\ge5$, we have
  $
  \RT(k,\CP_3)=1+\tfrac1{\ceil{\nicefrac k2}}.
  $
\end{theorem}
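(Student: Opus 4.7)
The plan is essentially bookkeeping: Theorem~\ref{th:cp3_k} is just the combination of the three preceding lemmas, so the proof amounts to verifying that the bounds match and that the lemmas jointly cover all $k\ge 5$.

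First I would establish the lower bound. For every $k\ge 5$, \Cref{lem:cp35low} gives directly
\[
\RT(k,\CP_3)\ge 1+\tfrac{1}{\ceil{\nicefrac{k}{2}}},
\]
with no case split needed.

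Next I would handle the upper bound by splitting into two cases according to whether $k=5$ or $k\ge 6$. For $k=5$, one has $\ceil{\nicefrac{5}{2}}=3$, so the target upper bound is $1+\tfrac{1}{3}=\tfrac{4}{3}$, which is precisely the content of \Cref{lem:cp35up}. For $k\ge 6$, \Cref{lem:cp36up} already gives the matching upper bound $1+\tfrac{1}{\ceil{\nicefrac{k}{2}}}$ (the lemma is even stated in this form, using the observation that $\ceil{k/2}=\ceil{(k-1)/2}$ when $k$ is even, so it suffices to treat odd $k$).

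Combining the matching lower and upper bounds in both cases yields the stated equality for every $k\ge 5$. There is no real obstacle here — the only thing to check is that $\tfrac{4}{3}$ in \Cref{lem:cp35up} is indeed $1+\tfrac{1}{\ceil{\nicefrac{5}{2}}}$, so that the two upper-bound lemmas together cover every $k\ge 5$ with the correct value. All the substantive work (the forced-coloring argument for the lower bound, the Pansiot-code construction for $k=5$, and the cyclic leaf coloring for $k\ge 6$) has already been done in the lemmas.
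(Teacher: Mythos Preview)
Your proposal is correct and matches the paper's own proof exactly: the paper simply states that \Cref{lem:cp35low,lem:cp35up,lem:cp36up} together imply the theorem, which is precisely the bookkeeping you carry out.
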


Observe that for all $k\ge4$, we have $\RT(k,\CP)=\tfrac32$.
Indeed, caterpillars are trees and thus $\RT(k,\CP)\le\RT(k,\T)=\tfrac32$.
On the other hand, we have $\RT(k,\CP)\ge \RT(k,K_{1,k}) = \tfrac32$ (where $K_{1,k}$ is the star of degree $k$).

\section{Trees of maximum degree 3}

The class of trees of maximum degree 3 is denoted by
$\mathcal{T}_3$. Let $T\in\mathcal{T}_3$ be the infinite embedded rooted tree
whose vertices have degree 3, except the root which has degree 2.
Thus, every vertex of $T$ has a left son and a right son. The \emph{level}
of a vertex of $T$ is the distance to the root (the root has level 0).  

Since every tree of maximum degree 3 is a subgraph of $T$, we only
consider $T$ while proving that $\RT(k,\mathcal{T}_3)\le \alpha$ for
some $k$ and $\alpha$.

\bigskip

Note first that $\RT(4,\mathcal{CP}_3) \le \RT(4,\mathcal{T}_3)  \le \RT(4,\mathcal{T})$ and thus  $\RT(4,\mathcal{T}_3)=\tfrac32$.

\begin{figure}[htp!]
  \centering
  \includegraphics[scale=0.7]{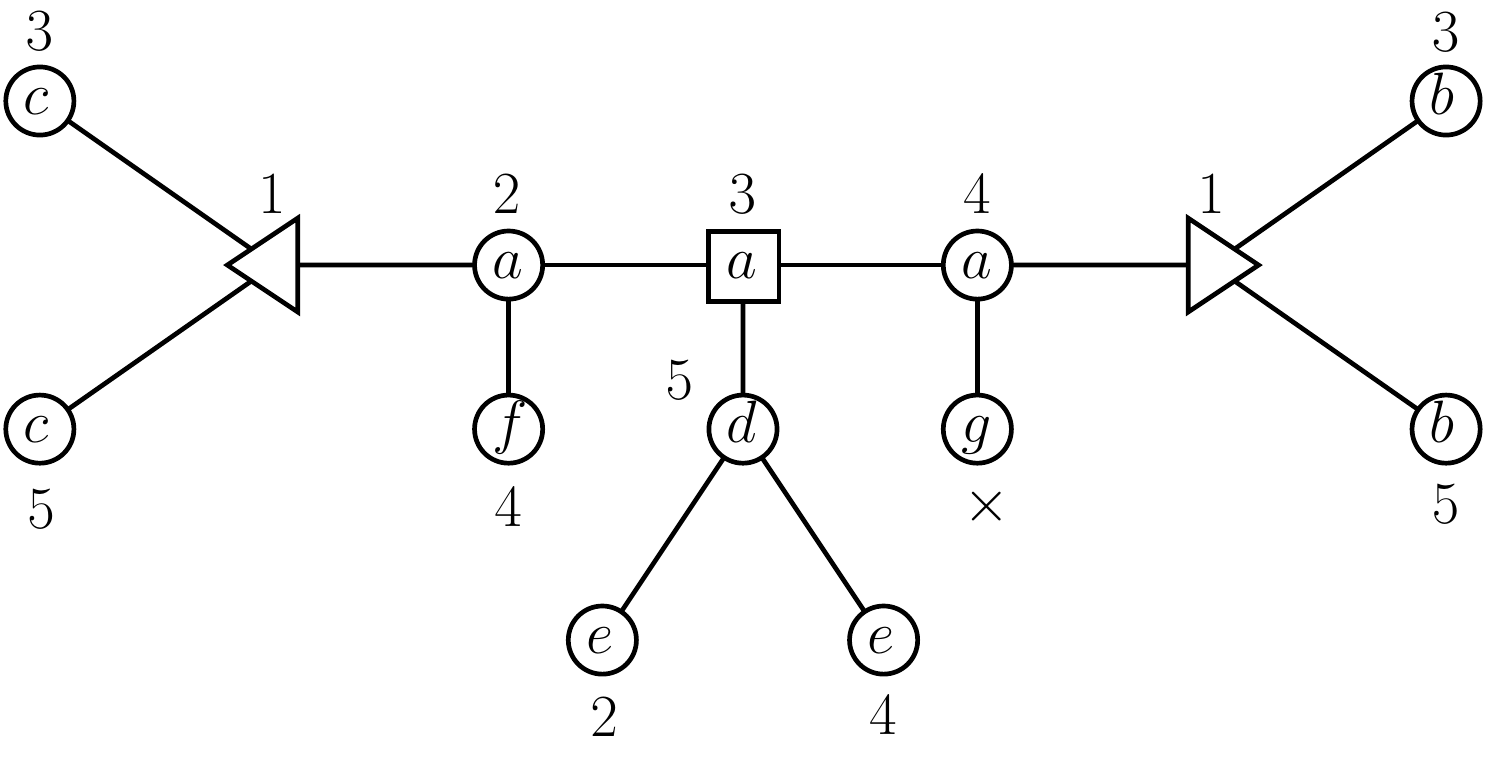}
  \caption{Construction for Theorem~\ref{th:tree3_5}}
  \label{fig:tree} 
\end{figure}

\begin{theorem}\label{th:tree3_5}
  $\RT(5,\mathcal{T}_3) = \tfrac32$.
\end{theorem}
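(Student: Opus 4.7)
The upper bound $\RT(5,\mathcal{T}_3)\le\tfrac32$ is immediate from Theorem~\ref{thm:tree}$(iii)$, since $\mathcal{T}_3\subseteq\mathcal{T}$ yields $\RT(5,\mathcal{T}_3)\le\RT(5,\mathcal{T})=\tfrac32$. For the matching lower bound, the plan is to exhibit the finite tree $T^*\in\mathcal{T}_3$ depicted in Figure~\ref{fig:tree} and to show, by contradiction, that every 5-coloring of $T^*$ contains a factor of exponent at least $\tfrac32$.

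The analysis leans on three elementary consequences of a 5-coloring in which every factor has exponent strictly below $\tfrac32$: (a) any two vertices at distance at most $2$ receive distinct colors, else a factor $xx$ (exponent $2$) or $xyx$ (exponent $\tfrac32$) appears; (b) no five consecutive path-vertices $u_1\ldots u_5$ satisfy both $c(u_1)=c(u_4)$ and $c(u_2)=c(u_5)$, else the pattern $xyzxy$ (exponent $\tfrac53$) appears; (c) no six consecutive path-vertices $u_1\ldots u_6$ satisfy both $c(u_1)=c(u_5)$ and $c(u_2)=c(u_6)$, else the pattern $xyzwxy$ (exponent $\tfrac32$) appears. Observation (a) alone only gives a proper distance-$2$ coloring, which is attainable with $4$ colors in $\mathcal{T}_3$; the real work is therefore extracted from (b) and (c).

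Assuming for contradiction such a 5-coloring $c$ of $T^*$, I focus on a central vertex $v$ of $T^*$ of degree $3$ whose three neighbors $v_1,v_2,v_3$ are themselves of degree $3$. By (a), the colors $c(v),c(v_1),c(v_2),c(v_3)$ are pairwise distinct; by color permutation, I set $c(v)=1$ and $c(v_i)=i+1$. For each $i$, the two neighbors $a_i,b_i$ of $v_i$ distinct from $v$ satisfy $\{c(a_i),c(b_i)\}\subseteq\{1,\ldots,5\}\setminus\{1,i+1\}$, a three-element set, so there are three candidate 2-subsets per branch. Applying (b) to each five-vertex path $a_i v_i v v_j a_j$ (plus the variants with $a_i$ or $a_j$ replaced by $b_i$ or $b_j$) and (c) to the length-$6$ extensions by one more vertex at either end (using the deeper vertices of $T^*$) yields a system of cross-branch constraints. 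A finite case analysis on the at-most-$3^3$ configurations of $(\{c(a_i),c(b_i)\})_{i=1,2,3}$, combined with further propagation one level deeper into $T^*$, is meant to close off every case, either leaving no admissible color at some vertex or exhibiting a direct $\tfrac32$-repetition on a path crossing $v$.

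The main obstacle is the case analysis, which is also the reason a careful design of $T^*$ is crucial: the legs of $T^*$ must be long enough for both (b) and (c) to reach across $v$, and in every configuration the only ``free'' colour (the one unused by $\{v,v_1,v_2,v_3\}$) cannot simultaneously play every cross-branch role required. This is what pushes us above the caterpillar bound $\RT(5,\mathcal{CP}_3)=\tfrac43$ of Theorem~\ref{th:cp3_k}: a caterpillar has only a single pendant at each backbone branching, whereas at the non-leaf branching at $v$ of $T^*$ the three longer branches together consume more colour resources than $5$ colours can supply.
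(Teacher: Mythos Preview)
Your upper bound and overall strategy (fix a finite tree, assume a $\tfrac32$-free 5-coloring, derive a contradiction) match the paper. The gap is that you never actually execute the case analysis you announce, and the organising idea you set up makes that analysis much heavier than necessary.

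The paper's argument starts from a single observation you overlook. In your own notation, the six grandchildren $a_1,b_1,a_2,b_2,a_3,b_3$ of $v$ are the six vertices at distance~2 from $v$; with only five colours, two of them coincide in colour, and by your observation~(a) those two must lie in different subtrees, hence at distance exactly~4. Up to the automorphisms of the tree, this pins down a \emph{single} starting configuration (the two ``triangle'' vertices of Figure~\ref{fig:tree} share colour~1), after which the paper propagates almost deterministically: the three vertices on the path between them take three fresh colours (stage~$a$), and then each subsequent stage $b,c,d,e,f,g$ leaves at most two admissible colours for the next labelled vertex, with a binary choice absorbed by symmetry. Seven short forced steps reach a vertex that cannot be coloured at all.

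By contrast, your plan fixes the colours of $v,v_1,v_2,v_3$ first and then branches on the $\le 3^3$ choices of the unordered pairs $\{c(a_i),c(b_i)\}$, hoping that rules (b) and (c) will prune everything. That is not wrong in principle, but (i) it reproduces, hidden inside the 27 cases, the same pigeonhole conclusion the paper extracts for free, and (ii) it still requires going one or two levels deeper into $T^*$, which you only gesture at (``further propagation one level deeper''). As written, the proposal stops precisely where the content is: you state that the case split ``is meant to close off every case'' without closing any. If you want to keep your framework, at least begin with the pigeonhole on the six grandchildren to collapse the $3^3$ split to essentially one case, and then carry the propagation to the contradiction explicitly.
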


\begin{proof}
  Note first that
  $\RT(5,\mathcal{T}_3) \le \RT(5,\mathcal{T}) = \tfrac32$. Let us
  show that this bound is best possible.
  
  Let $G\in\mathcal{T}_3$ be the graph depicted in
  Figure~\ref{fig:tree} and let $v\in V(G)$ be the squared vertex.
  In every $5$-coloring of $G$, at least two among the six vertices at distance
  $2$ of $v$ will get the same color. In every $\tfrac32$-free
  $5$-coloring, the distance between these two vertices is
  four. W.l.o.g., the two triangle vertices of Figure~\ref{fig:tree}
  are colored with the same color, say color $1$. Then, we color the
  other vertices following the labels in alphabetical order (a vertex
  labelled $x$ is called an $x$-vertex). The $a$-vertices have to get
  three distinct colors (and distinct from~$1$), say $2$, $3$, and
  $4$. The $b$-vertices can only get colors $3$ or $5$ and they must
  have distinct colors in every $\tfrac32$-free $5$-coloring. This is the
  same for $c$-vertices. The $d$-vertex must then get color $5$. Therefore
  the $e$-vertices can only get colors $2$ or $4$. The $f$-vertex must get
  color $4$. Finally, the $g$-vertex cannot be colored without
  creating a forbidden factor. Thus $\RT(5,\mathcal{T}_3)\ge\tfrac32$,
  and that concludes the proof.
\end{proof}

\begin{theorem}\label{th:tree}
  For every $t\ge 4$, we have $$\RT((t+1)2^{\left\lfloor(t+1)/2\right\rfloor},\mathcal{T}_3)\le1+\tfrac1{t}\le\RT\paren{3\paren{2^{\floor{t/2}}-1},\mathcal{T}_3}.$$
\end{theorem}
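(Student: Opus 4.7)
The theorem packages an upper bound and a lower bound with different palette sizes, so I would prove them independently.

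\textbf{Upper bound}, $\RT((t+1)2^{\lfloor(t+1)/2\rfloor},\mathcal{T}_3)\le 1+\tfrac1t$. The strategy is an explicit product coloring of the infinite tree~$T$ defined in the paper. Let $D$ be an infinite $(1+\tfrac1t)^+$-free word on $t+1$ letters (available from \Cref{thm:path}$(iv)$, since $t\ge 4$ gives $t+1\ge 5$). Color each vertex $v$ by the pair $\bigl(D(\ell(v)),\,\beta(v)\bigr)$, where $\ell(v)$ is the level of $v$ and $\beta(v)$ encodes the last $\lfloor(t+1)/2\rfloor$ left/right turns on the root-to-$v$ path (left-padded by a fixed symbol when the level is smaller). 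This uses exactly $(t+1)\cdot 2^{\lfloor(t+1)/2\rfloor}$ colors. Any path in~$T$ decomposes through its lowest vertex $w$ (the LCA of the endpoints) into an upward and a downward segment. On a monotone segment, the first coordinate traces a factor of~$D$ and is therefore $(1+\tfrac1t)^+$-free. For a candidate repetition of period $d$ straddling $w$, the two identified positions lie in distinct subtrees of~$w$; I would show that when $d\le 2\lfloor(t+1)/2\rfloor$, the disagreement point between their root-paths sits inside the $\beta$-window and forces the $\beta$-coordinates to differ, while for larger $d$ the constraint $et>d$ forces enough equalities on the first coordinate to contradict Dejean's property of~$D$. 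The main obstacle is to ensure that the two safeguards (branch-bits for short periods, Dejean for long periods) leave no uncovered regime of~$d$.

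\textbf{Lower bound}, $\RT(3(2^{\lfloor t/2\rfloor}-1),\mathcal{T}_3)\ge 1+\tfrac1t$. Set $k=3(2^{\lfloor t/2\rfloor}-1)$ and assume for contradiction a $(1+\tfrac1t)^+$-free $k$-coloring of some tree in $\mathcal{T}_3$. The key elementary observation is that any two vertices at distance $d<t$ must receive distinct colors: otherwise the connecting path is a repetition of exponent $\tfrac{d+1}{d}>1+\tfrac1t$. For odd $t=2s+1$, this finishes the proof via a ball argument: the ball of radius $s=\lfloor t/2\rfloor$ around any degree-$3$ vertex contains exactly $1+3(2^s-1)=k+1$ vertices, pairwise at distance at most $2s=t-1$, contradicting the palette size by pigeonhole.

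For even $t=2s$, the analogous ball of radius $s$ still has $k+1$ vertices, but now the extremal pairs (two leaves in different branches of the center~$v$) are at distance exactly $t$, for which the distance argument alone does not force distinctness. Here I would push one level deeper: pigeonhole on the ball forces two leaves $x,y$ in distinct branches of $v$ to share a color, and extending the $x$--$y$ path by one edge on at least one side yields a longer path on which every interior vertex is already forced to be distinctly colored (by the distance argument) while retaining the coincidence $c(x)=c(y)$. This coincidence, together with the forced palette on the interior, produces a repetition of period at most $t$ and excess at least~$2$, hence of exponent strictly greater than $1+\tfrac1t$. Verifying that no assignment of the remaining palette to the great-grandchildren in the three branches can dodge all such forced repetitions simultaneously is the principal technical obstacle of the lower bound, and I expect to handle it by a careful case analysis on how the few unused colors are distributed across the outer layer of the ball.
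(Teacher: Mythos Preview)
Your lower bound has a small but consequential slip: to prove $\RT\ge 1+\tfrac1t$ you should assume, for contradiction, a $(1+\tfrac1t)$-\emph{free} coloring (no exponent $\ge 1+\tfrac1t$), not a $(1+\tfrac1t)^+$-free one. Under the correct hypothesis, two vertices at distance $d\le t$ already give exponent $1+\tfrac1d\ge 1+\tfrac1t$ and must be colored differently; the ball of radius $\lfloor t/2\rfloor$ around a degree-$3$ vertex has diameter $2\lfloor t/2\rfloor\le t$ and exactly $k+1$ vertices, so pigeonhole finishes \emph{both} parities at once. This is exactly what the paper does; your even-$t$ case analysis is unnecessary and stems from the wrong freeness assumption.

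Your upper bound has a real gap, and the missing idea is what the paper supplies. The paper's coloring is almost yours, but with one twist: rather than putting the Dejean word $D$ directly on levels and using $\lfloor(t+1)/2\rfloor$ branch bits, it first applies the letter-doubling morphism $i\mapsto i_0i_1$ to $D$ (so the level component takes $2(t+1)$ values) and uses only $\lfloor(t-1)/2\rfloor$ branch bits --- the same total palette. The doubling is the crux: from the $\gamma$-components of any two adjacent vertices one can read off which is the parent (if the subscripts agree the $i_0$ vertex is the parent, otherwise the $j_1$ vertex is). Hence on a V-shaped path through the LCA, any repetition with excess $\ge 2$ would match an ordered adjacent pair on the ascending leg with one on the descending leg, forcing contradictory parent/child orientations. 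Your construction carries no such direction information, and your fallback ``Dejean for long periods'' does not apply as stated: along a V-shaped path the first-coordinate word is the \emph{folded} sequence $D(\ell{+}a)\cdots D(\ell{+}1)D(\ell)D(\ell{+}1)\cdots D(\ell{+}b)$, not a factor of $D$, so a repetition there does not translate into one in $D$. Indeed, one can choose branches so that two consecutive vertices deep on the left share both coordinates with two consecutive vertices on the right whenever $D$ has a length-$2$ factor occurring reversed at suitable distance --- a palindromic, not a repetition, constraint on $D$, which Dejean-freeness does not preclude. So the ``main obstacle'' you flag is not merely technical; without the doubling trick (or an equivalent orientation device) the argument does not close.
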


\begin{proof}
  To prove that
  $\RT((t+1)2^{\left\lfloor(t+1)/2\right\rfloor},\mathcal{T}_3)\le1+\tfrac1{t}$,
  we color $T$ as follows. Let $f$ be the
  coloring of $T$ mapping every vertex $v$ to a color of the form
  $(\gamma,\lambda)$ with $0\le \gamma\le 2t+1$
  and $\lambda\in\{0,1\}^{\floor{(t-1)/2}}$. Let us describe each
  component of a color:
  \begin{description}
  \item[$\gamma$-component:] Let us consider a Dejean word $w$ over
    $t+1$ letters which is $(1+\tfrac 1t)^+$-free since $t\ge 4$. We apply to
    $w$ the morphism $m$ which doubles every letter, that is for every
    letter $i$ such that $0\le i\le t$, $m(i) = i_0i_1$. Let $w'=m(w)$
    and, given $\ell\ge 0$, let $w'[\ell]$ be the $\ell$-th letter of
    $w'$. Every vertex at level $\ell$ gets $w'[\ell]$ as
    $\gamma$-component.

  \item[$\lambda$-component:] Given a vertex $v$, let $u$ be its ancestor at distance
    $\floor{\tfrac{t-1}2}$ (or the root if $v$ is at level
    $\ell<\floor{\tfrac{t-1}2}$). Let
    $u=u_0,u_1,u_2,u_3,\ldots,u_{\floor{\tfrac{t-1}2}}=v$ be the path
    from $u$ to $v$. The $\lambda$-component of $v$ is the binary word built
    as follows: if $u_{i+1}$ is the left son of $u_i$, then $\lambda[i]=0$;
    otherwise, $\lambda[i]=1$. If $v$ is at level $\ell<\floor{\tfrac{t-1}2}$,
    add $\floor{\tfrac{t-1}2} - \ell$ 0's as prefix of $\lambda$.
  \end{description}

  Let us prove that $f$ is a $(1+\tfrac 1t)^+$-free coloring. 

  Suppose that there exists a forbidden repetition such that the
  repeated factor is a single letter $a$, that is a factor $axa$ where
  $|ax|<t$. Let $u$ and $v$ be the two vertices colored
  $f(u)=f(v)=(\gamma,\lambda)$. The vertices $u$ and $v$ must lie on the
  same level, since otherwise they would be at distance at least $2t$
  due to $\gamma$. Since $u$ and $v$ are distinct and have
  the same $\lambda$, their common ancestor is at distance at least
  $\floor{\tfrac{t-1}2}+1$ from each of them. 
  Thus $u$ and $v$ are at distance at least
  $2\paren{\floor{\tfrac{t-1}2}+1}\ge t$, which contradicts $|ax|<t$.

  Suppose now that there exists a forbidden repetition such that the
  length of the repeated factor is at least $2$. Suppose first that the
  path supporting the repetition is of the form
  $l_{i},l_{i-1},l_{i-2},\ldots,l_1,l_0=u=r_0,r_1,r_2,\ldots,r_j$
  where $l_1$ and $r_1$ are the left son and the right son of $u$,
  respectively. Let $l_i,l_{i-1},\ldots,l_1,l_0$ be the 
  \emph{left branch} and $r_0,r_1,\ldots,r_{j-1},r_j$ be the 
  \emph{right branch} of the path.
  W.l.o.g., assume that $1\le j \le i$.
  Therefore, there exist two vertices $l_k$ and $l_{k-1}$ of the left
  branch such that $f(l_kl_{k-1})=f(r_{j-1}r_j)$. Let us show that,
  given the $\gamma$-components of the colors of two adjacent
  vertices, it is possible to determine which vertex is the
  father. W.l.o.g. the two $\gamma$-components are $i_0$ and $j_1$. If
  $i=j$, then the father is the vertex with $\gamma$-component $i_0$;
  otherwise, $i\neq j$ and the father is the vertex with
  $\gamma$-component $j_1$.  This is a contradiction since $l_{k-1}$
  is the father of $l_k$ and $r_{j-1}$ is the father of $r_j$.
  Suppose finally that the path supporting the repetition does not
  contain two brothers. This is equivalent to say that $m(w)$ is not
  $(1+\tfrac 1t)^+$-free. It is clear that if the $m$-image of a word
  contains an $e$-repetition, then this word necessarily
  contains an $e$-repetition. This implies that $w$ is not $(1+\tfrac
  1t)^+$-free, a contradiction. 

  Therefore, we have $\RT((t+1)2^{\left\lfloor(t+1)/2\right\rfloor},\mathcal{T}_3)\le1+\tfrac1{t}$.

  \bigskip

  To prove that $\RT\paren{3\paren{2^{\floor{t/2}}-1},\mathcal{T}_3}\ge1+\tfrac1{t}$,
  we consider the tree $T\in\mathcal{T}_3$ consisting of a vertex and all of its neighbors
  at distance at most $\floor{t/2}$. The distance between every two vertices in $T$ is at most $t$.
  Thus, no two vertices of $T$ have the same color in a $(1+\tfrac1{t})$-free coloring.
  Since $T$ contains $3\paren{2^{\floor{t/2}}-1}+1$ vertices,
  it admits no $(1+\tfrac1{t})$-free coloring with $3\paren{2^{\floor{t/2}}-1}$ colors,
  which gives $\RT\paren{3\paren{2^{\floor{t/2}}-1},\mathcal{T}_3}\ge1+\tfrac1{t}$.

\end{proof}

Note that Theorem~\ref{th:tree} can be generalized to trees of bounded
maximum $\Delta$. This would give the following:
$$\RT(2(t+1)(\Delta-1)^{\left\lfloor(t-1)/2\right\rfloor},\mathcal{T}_\Delta)\le1+\tfrac1{t}\le\RT\paren{\frac{\Delta\paren{(\Delta-1)^{\floor{t/2}}-1}}{\Delta-2},\mathcal{T}_\Delta}$$

\section{Conclusion}

In this paper, we continued the study of repetition thresholds in colorings of various subclasses of trees.
We completely determined the repetition thresholds for caterpillars and caterpillars of maximum degree $3$, 
and presented some results for trees of maximum degree $3$. 
There are several open questions in the latter class for which it appears that more advanced methods of analysis
should be developed. In particular, our bounds show that 
$$
3 \le \RT\paren{2,\mathcal{T}_3} \le \frac{7}{2} \quad \mbox{and} \quad 2 \le \RT\paren{3,\mathcal{T}_3} \le 3\,,
$$
however, we have not been able to determine the exact bounds yet. 
Additionally, the repetition thresholds in trees of bounded degrees 
for alphabets of size at least $6$ remain unknown.

\bigskip
\noindent
{\bf Acknowledgement.} 
The research was partially supported by Slovenian research agency ARRS program no.\ P1--0383 and project no.\ L1--4292
and by French research agency ANR project COCOGRO.


\end{document}